\begin{document}

\title{Network Localization on Unit Disk Graphs\thanks{Supported by National Science Foundation Under Grants No. CCF-0635199, CCF-0829888, CMMI-0928092, and OCI-1133027. This paper appears in IEEE GLOBECOM 2011, Houston, TX.}}
\author{\begin{tabular}{ccc}
 Phisan Kaewprapha & Jing Li (Tiffany) & Nattakan Puttarak\\
Dept of ECE, Lehigh University & Dept of ECE, Lehigh University & Dept of ECE, Lehigh University\\
Bethlehem, PA 18015, USA& Bethlehem, PA 18015, USA& Bethlehem, PA 18015, USA\\
phk205@lehigh.edu &  jingli@ece.lehigh.edu &  nap205@lehigh.edu
\end{tabular}\vspace{-0.3cm}
}

\maketitle


\begin{abstract}
We study the problem of cooperative localization of a large network of nodes in integer-coordinated unit disk graphs, a simplified but useful version of general random graph. Exploiting the property that the radius $r$ sets clear cut on the connectivity of two nodes, we propose an essential philosophy that ``no connectivity is also useful information just like the information being connected'' in unit disk graphs. Exercising this philosophy, we show that the conventional network localization problem can be re-formulated to significantly reduce the search space, and that global rigidity,  a necessary and sufficient condition for the existence of unique solution in general graphs, is no longer necessary. While the problem is still NP-hard, we show that a (depth-first) tree-search algorithm with memory $O(N)$ ($N$ is the network size) can be developed, and for practical setups, the search complexity and speed is very manageable, and is magnitudes less than the conventional problem, especially when the graph is sparse or when only very limited  anchor nodes are available.
\end{abstract}

\section{Introduction and Motivation}
\label{sec:intro} 
Localization using GPS is a mature technology that has been deployed and proven very useful in a myriad of applications in  military, commercial and personal uses. In other scenarios, like wireless sensor networks where information about locations is as important as measuring data itself, GPS-equipped devices may not always be practical, since the sensing devices are often limited by the cost,   power or lack of the line-of-sight from the satellites. 

This paper considers a general scenario involving a network system, where some {\it anchor} nodes already know their exact locations (through GPS or other methods), while the rest ({\it non-anchor} nodes) do not and wait to be localized. While the GPS system can be viewed as a single-hop method, network localization is usually classified as multi-hop localization. 
Distance information between nodes are often attainable by measuring wireless modalities, such as time-of-arrival (TOA), angle-of-arrival(AOA), received-signal-strength (RSS), or the combination of them \cite{Mao}\cite{bao}. 
In general, one may assume that no non-anchors have distance measurements from 3  or more different anchors (anchors not in the same line), because otherwise these non-anchors can be quickly localized through  geometric trilateration or triangulation (single-hop method) and become anchors. 

Since a non-anchor does not have direct reachability (distance information) to enough anchors to be localized by itself, they must share resources and cooperate to accomplish the task together. Many researchers formulate this multi-hop localization problem as some type of optimization problem (non-linear, convex, semi-definite-programming) where standard algorithms can be applied, usually in a centralized manner \cite{Kannan,Biswas}. Other approaches fall into the heuristic category where the distances from a non-anchor to anchors are approximated using different techniques, varying from the number of hops \cite{Bulusu} to the sum of the exact distances \cite{Niculescu}.      

In addition to algorithmic study, there is also on-going effort to lay down the theoretical building blocks of this problem. 
In the theoretical framework, the network is commonly modelled as a geometric graph, where distances are exact and represented by straight-line edges (with definite lengths) between nodes. (More system models will be discussed in the next section.) Part of the theory deals with the properties/conditions that lead to unique localization solution. For instance, (global) rigidity 
of the graph guarantees the existence of the unique realization of the graph. If one can check for the rigidity purely based on the graph connectivity (e.g. 3-connected, 6-connected) \cite{Aspnes1}, then one can immediately determine the existence of the solution without even trying to realize the graph, and there exist efficient algorithms to verify rigidity and the related conditions. Despite  such existence tests, however, realizing a rigid graph remains a challenging problem in general. 
One exception is {\it trilateration graphs} for which efficient realization algorithms exist. A trilateration graph is one such that, at any time, at least one non-anchor is connected to enough anchors (at least 3  for 2D) to determine its location and becomes an anchor, and this condition holds iteratively through out the network.
For an arbitrary geometric graph model (non-trilateration graphs), in term of the complexity to realize the graph, they are NP-Hard \cite{Breu,Aspnes2}.

In an attempt to find more effective way to tackle the problem, unit-disk-graph model is also  used to present a simplified version of  the arbitrary graph, in the hope to reduce the problem to the point where efficient algorithm can be applied. A well-known network model widely used for modelling wireless ad-hoc networks, a unit disk graph preserves the constraint that two nodes are connected if and only if they are within $r$ distance away. Variations can also be made to suit one's need, such as uniform or non-uniform radius $r$ across the network, but the key concept is that $r$ defines a clear-cut boundary of the connectivity. Although network localization can be modelled this way, and may even be further simplified to aligning nodes in the integer domain, the problem remains NP-Hard \cite{Aspnes2}.

On the other hand, however, NP-Hard is considered from a worst-case perspective. For example, in a non-trilateration graph, if a large number of non-anchors are assisted by only 3 anchors, then NP-Hard stands strong for its complexity. In a practical applications, there may be other conditions and parameters (such as denser networks, larger communication ranges and more anchors), which, although do not reduce the theoretical complexity from NP-Hard to something easier, do help improve the average complexity as we will later show through our proposed search algorithm.

In this paper, based on the unit-disk-graph model, we re-evaluate the relation between graph rigidity and unique positioning, and study practical search algorithms for network localization.    
We point out that the implicit properties of the unit disk graph can provide great benefits in localization. Specifically, while ``being connected'' bears good information on the separation and possible posts of the nodes, ``not being connected'' contains equally important information about their locality in terms of where they cannot be. However, this latter information has been largely ignored. We show that an active exploitation of the forbidden region (as well as the the feasible region) can greatly change the landscape of the problem and significantly expedite the practical search algorithms. We also show that the conventional rigidity requirement is no longer the necessary condition for the uniqueness of the solution. This means that rigid graphs  become only a subset of the total set of networks where a unique solution can be found. Exercising on a grid graph with integer coordinates, we further propose a tree-search algorithm that can find the solution within a manageable complexity. The proposed search is derived from depth-first search, but makes essential use of the multiple conditions and constraints inferred by the unit disk graph, to actively rule out invalid branches and keep the intermediate results consistent. While the problem is still NP-hard, simulations show that the search complexity is significantly lower than if the ``no-connection'' information is simply ignored.



\section{Problem Formulation and Rigidity Condition}
\label{sec:formulation}

We formulate the localization problem as a unit disk graph  $G=\{V,E\}$ consisting of $N=|V|$ vertices (nodes) residing in a 2-dimensional (2D) plane. 
A pair of nodes $(i,j)$ whose distance falls within a uniform radius distance of $r = \{r|r \le 1\}$ is assigned an edge (link) $E_{(i,j)}$ with distance parameter $d_{(i,j)}$. We assume the graph is connected and the vertices are placed uniformly at random in the area. We also assume that a special set of $M$ know their absolute locations, where $3\le M < N$ and these $M$ nodes do not lie in the same line, which  is the necessary condition to avoid ambiguity from simple transformation like rotating, flipping, or shifting the graph \cite{Graver}. A realization $p(V_i)$ of $G$ is a point assignment of a node $V_i$ into the plane, and is specified by a coordinate vector $x_i$. For simplicity, we further constrain $x_i$ to be aligned at integer coordinates only and use the Euclidean distance square $d^2_{(i,j)}$ as the distance metric. Such a simplification transforms the graph to a grid graph that deals with integer-valued distances only, but does not lose the generality of a random graph (since any real value with a given level of precision can be scaled up to an integer, with the rest of the graph also scaled proportionally). 

Using this model, the network localization problem can in general be stated as follows: Given a graph defined above, find the absolute/unique location of each node: 
\begin{equation}
\begin{array}{cll}
& \mbox{find :  } x_i, &\  \forall i \in 1..N \\
 \mbox{s.t.}   &   g(x_i,x_j) = d^{2}_{(i,j)}, & \  \forall E_{(i,j)} \in E \\
  & x_i \in \mathbb{Z}^2. & \ \\
  & x_i \neq x_j, & \  \forall i, j \\
 \end{array}
 \label{eq:unit1}
\end{equation}  
where $g(x_i,x_j) = ||x_i-x_j||^2$ is the distance metric. This problem formulation can be classified as a non-linear integer optimization problem, and has an NP complexity that is as difficult as any non-linear integer optimization problem  \cite{Aspnes2}. 

Now consider the implication of the unit disk graph. If two nodes do not have connection, it implies that these nodes must be away from each other by more than $r$. This condition can and should be explicitly put in the mathematical formulation, to restrict the search space and minimize the search effort: 
\begin{equation}
\begin{array}{cll}
 & \mbox{find :  } x_i. & \ \forall i \in 1..N \\
 \mbox{s.t.}   &   g(x_i,x_j) = d^{2}_{(i,j)}, & \ \forall E_{(i,j)} \in E \\
  & x_i \in \mathbb{Z}^2, & \\
  & x_i \neq x_j, & \ \forall i, j \\
  & g(x_i,x_j) > r^2, & \ \forall E_{(i,j)} \notin E.\\

 \end{array}
 \label{eq:unit2}
\end{equation}

As we mentioned earlier, there is an equivalence relation between global rigidity and unique realization of a graph. For basic background and formal definition of graph rigidity, we refer readers to \cite{Graver}. A descriptive and informal concept of graph rigidity can be explained as follows: Replace vertices with free-moving joints and edges with rigid string; then rigidity means that the graph does not change when applying a vector of forces to any subset of the joints, aka the distance between two remains the same, for any pair of nodes. 
Now that we have used the unit-disk-graph model, if we strict ourselves to only the conditions in (\ref{eq:unit1}), then global rigidity and unique realization still holds. However, when we follow the constraints in (\ref{eq:unit2}), then a graph with unique realization no longer implies rigidity. It is enough to confirm this finding through a counter-example. Figure \ref{fig:unit1} shows a graph where we can find a unique localization solution without requiring rigidity. Given 3 anchor nodes (solid black) and 5 edges, the non-anchor node (gray) has 2 possible locations under conditions in (\ref{eq:unit1}). However, with the constraints in (\ref{eq:unit2}), only one graph realization (A) is valid, and the other possibility can be ruled out since the non-anchor falls within the transmission range of an anchor (and should otherwise has an edge to it). 


\vspace{-1.3cm}
\begin{figure}[htp]
\centerline{
\includegraphics[width=1.5in]{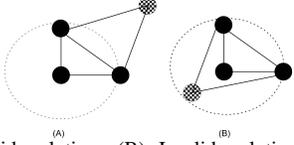}
\vspace{-0.4cm}
}
\caption{(A) Valid solution; (B) Invalid solution, under disk-graph assumption.}
\label{fig:unit1}
\end{figure}
\vspace{-0.4cm}

\section{Tree-Search Algorithm}
\label{sec:search}
\subsection{The Algorithm}

The proposed tree-search localization algorithm is based on the idea of examining the realization possibilities constructively, similarly in flavor to that of list-decoding in channel codes. The anchor nodes form the initial set of realized nodes $R$, and all the non-anchors belong to the un-realized set $U=V\backslash R$. These two sets are collectively viewed as an intermediate instance $T^l_k$ for constructing a complete realization, which is achieved when $U$ is empty and all nodes are in $R$. Our approach is to efficiently go through all possible instances of the realization and  pick the right solution out of that pool. By setting the initial instance $T^0_0$ as a root node of a tree structure, we expand this instance to new instances by moving a non-anchor from the un-realized set to the realized set $R$, at all the possible locations that this non-anchor can possibly be. Each possible case corresponds to a new instance $T^l_k$, where $l$ is the level of the tree and $k=0,1,2\cdots$ is just an index in that level. 
The general process of growing the tree is to 
pick a non-anchor $j\in U$ that has connection to at least one realized node $i\in R$, and priority should be given to the non-anchor that has the most connections with $R$. The distance information $d^2_{(i,j)}$ states that node $j$ must sit within this radius from node $i$, and since location must be aligned in integer coordinates, the possibilities are discrete and countable (and can therefore be indexed by $k=0,1,\cdots$). For efficiency, many of these possibilities should be immediately trimmed out by testing the connectivity/distance conditions between node $j$ and all the other nodes in $R$. Violation of any edge condition or any no-edge condition  should lead to the removal of this instance.  
Clearly, each time a new node is added to $R$ makes the tree to grow one more level. The process continues 
 until all the $(N\!-\!M)$ levels are completed, and the survival instances at the $(N\!-\!M)^{th}$ level are the solutions. We declare all the nodes to be unequivocally localized if there is only one survival. Since all the realization possibilities are either examined or ruled out due to connectivity violations, the tree-search algorithm is optimal. 

It may appear that this tree structure would grow exponentially, with all the possibilities listed at any given level. 
However, this worst case scenario does not really happen, because we make active consistency check using the conditions in (\ref{eq:unit2}). With the conventional network localization formulation in (\ref{eq:unit1}), a sparse graph (i.e. one with relatively small reachability $r$) is usually most time-consuming to localize, due to the very limited  edge information. The proposed strategy as formulated in (\ref{eq:unit2}) is particularly helpful for such cases, because it makes effective use of the lots of no-edge constraints, and therefore drastically reduces the search instances!  


\begin{algorithm}
\caption{Constructing Tree}
\label{alg:1}
\begin{algorithmic}
\STATE $T^0_0 \leftarrow \{R^0_0,U^0_0\}$ 
\FOR{($l=1,l \leq (N\!-\!M),l++$)} 
 \STATE Pick node $n$ from $U^{l-1}_k$, $n$ is a neighbor of $R^{l-1}_k$, $\exists k$  
 \FORALL{$T^{l-1}_k \in Tree$}
 \STATE $X=$ sub locations($n$,$T^{l-1}_k$)
 \FORALL{$x \in X$}
   \STATE add $T^l_x$ as a child of $T^{l-1}_k$  
 \ENDFOR	    
 \ENDFOR
\ENDFOR
 
\end{algorithmic}
\end{algorithm}
\vspace{-0.5cm}

\begin{algorithm}
\caption{Sub Locations($n$,$T$)}
\label{alg:2}
\begin{algorithmic}
\STATE  .\COMMENT {find all valid placements of a node n}
\STATE  . \COMMENT {Return: A point vector $Y=\{y_1,y_2,...\}$}
 \STATE $R \leftarrow$ realization in $T$
 \STATE $m \leftarrow $ neighbor($n$), \ \ $m \in R$
 \STATE $d \leftarrow $ distance $E_{(m,n)}$
 \STATE $X \leftarrow$ all points of radius $d$ at $m$   
 \FORALL{$x_i \in X$ }
 \STATE temporary place $x_i$ in $R$ 
 \IF {validate($x_i$,$R$) using eqn (\ref{eq:unit2}) }
 \STATE add $x_i$ to $Y$
 \ENDIF
 \ENDFOR
 \RETURN $Y$

\end{algorithmic}
\end{algorithm}

\newtheorem{theorem}{Claim}
\begin{theorem}
\label{claim:1}
Algorithm \ref{alg:1} produces at least one valid solution at the leaf node $T^{(N\!-\!M)}_k$, if the graph is realizable.  
\end{theorem}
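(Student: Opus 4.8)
The plan is to prove completeness by induction on the tree level $l$, showing that the search never trims away a branch consistent with a genuine realization, so at least one valid assignment survives to the leaves. First I would fix any valid realization $p^*$ of the graph: an integer point assignment satisfying every edge constraint, every no-edge constraint, and the distinctness constraint of (\ref{eq:unit2}). Such a $p^*$ exists precisely because the graph is assumed realizable. Since the $M$ anchors are pinned to their known, non-collinear positions, $p^*$ necessarily coincides with the anchors, which gives a clean base case for the induction.

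The core of the argument is the invariant: \emph{at each level $l$ there exists an instance $T^l_k$ in the tree whose partial realization agrees with $p^*$ on all $M+l$ placed nodes.} The base case $l=0$ is $T^0_0$, whose realized set $R$ is exactly the anchors, on which it agrees with $p^*$ by construction. For the inductive step, assume $T^{l-1}_k$ agrees with $p^*$. When Algorithm~\ref{alg:1} expands this instance by picking a node $n\in U$ neighboring $R$, I would show that $p^*(n)$ is among the children returned by the Sub~Locations routine (Algorithm~\ref{alg:2}). Indeed, $p^*(n)$ sits at the prescribed squared distance $d^2_{(m,n)}$ from its placed neighbor $m$, so it lies in the candidate set $X$ of integer points of that radius about $m$; and because $p^*$ is globally valid, placing $n$ at $p^*(n)$ cannot violate any edge or no-edge condition against any node already in $R$, so the validate step returns true and $p^*(n)$ is retained. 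The resulting child $T^l_x$ then agrees with $p^*$ on $M+l$ nodes, closing the induction.

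To finish, I would verify that the loop actually reaches level $N\!-\!M$: since the graph is connected and $R$ is a nonempty proper subset of $V$ until all nodes are placed, a cut edge always exists, guaranteeing some unplaced neighbor of $R$ to pick at each level, so the process never stalls. At $l=N\!-\!M$ the surviving consistent instance has placed all $N$ nodes in agreement with $p^*$ and is therefore a valid solution, establishing the claim. The step I expect to be the main obstacle is justifying that the pruning in validate never removes the true placement: the whole argument hinges on validate being \emph{sound}, i.e.\ rejecting a candidate only when it genuinely breaks a constraint of (\ref{eq:unit2}). Once that soundness is pinned down, $p^*(n)$ can never be trimmed, and this is precisely the crux that makes the tree-search complete.
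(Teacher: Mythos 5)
Your proposal is correct and follows essentially the same route as the paper's own proof: induction on the tree level, maintaining the invariant that some instance agrees with a fixed true realization, using the exhaustive enumeration of placements and the soundness of the consistency check against (\ref{eq:unit2}) to show the correct branch is never trimmed. Your version is in fact somewhat more careful than the paper's (explicitly fixing $p^*$, verifying $p^*(n)$ lies in the candidate set $X$, and arguing via connectedness that the expansion never stalls), but the underlying argument is the same.
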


\begin{proof}
Starting from level 0, the instance $T^{0}_k$ has $M$ nodes (anchors) in the realized set $R$ whose positions are valid. Note that ``being realized'' means that a tentative locality has been assigned to that node, but the assignment may be invalid (incorrectly realized) since violations have not shown. 
Based on the assumption of connected graph, at level 1, the tree constructed from the instance from the previous level must contain at least one instance whose $(M\!+\!1)$ nodes in $R$  are all being {\it correctly} realized. 
Now assume at any level $l$, where $1\!\le \!l\!<\!(N\!-\!M)$, there exists at least one instance $T^{l}_k$ that that has $(M\!+\!l)$ correct locations. Since the tree at level $l\!+\!1$ is produced from all possible placements of a node to the instances from level $l$, then among all the child instances that are inherited from the correct instance at level $l$, there must be at least 1 correct instance whose $(M\!+\!l\!+\!1)$ nodes in $R$ are all correctly realized.   
\end{proof}

\vspace{0.1cm}
\begin{theorem}
\label{claim:2}
Algorithm \ref{alg:1} is optimal, that is, it produces all the valid solutions at the leaf node $T^{(N\!-\!M)}_k$.
\end{theorem}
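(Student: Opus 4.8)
The plan is to prove optimality by showing that the tree-search never discards a branch that could lead to a valid solution, so that every valid realization corresponds to some leaf instance $T^{(N\!-\!M)}_k$ that survives to the end. Optimality here is the converse/completeness counterpart of Claim \ref{claim:1}: Claim \ref{claim:1} guarantees at least one correct realization survives, whereas here I must guarantee that \emph{all} valid realizations are produced. I would set up the argument as a contrapositive: suppose $p^\star$ is any valid realization (i.e., a complete assignment satisfying all constraints of (\ref{eq:unit2})); I want to exhibit a surviving leaf whose assignment equals $p^\star$.

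First I would argue by induction on the level $l$ that for every valid realization $p^\star$, there exists an instance $T^{l}_k$ in the tree at level $l$ whose $(M\!+\!l)$ realized positions agree with the corresponding coordinates of $p^\star$. The base case $l=0$ holds because the root contains exactly the anchors, whose positions are fixed and therefore agree with $p^\star$ (anchors are common to all valid realizations). For the inductive step, assume such a partial instance $T^{l}_k$ exists at level $l$, agreeing with $p^\star$ on its realized set $R$. The algorithm picks some neighbor node $n\in U$ and, in \texttt{Sub Locations}, enumerates \emph{all} integer points $X$ at the correct radius $d$ from a realized neighbor $m$, then retains every $x_i$ that passes the validation test against the constraints in (\ref{eq:unit2}). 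The key observation is that the position $p^\star(n)$ lies in $X$ (since $p^\star$ satisfies the edge-length constraint $g(p^\star(m),p^\star(n))=d^2_{(m,n)}$ and is integer-valued), and that $p^\star(n)$ \emph{passes} the validation test relative to $R=p^\star|_{R}$ (since $p^\star$ is a globally valid realization, it satisfies every edge and no-edge constraint between $n$ and all already-placed nodes). Hence the child instance placing $n$ at $p^\star(n)$ is created and survives, completing the induction.

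The crux of the argument, and the step I expect to require the most care, is the claim that the \emph{local} validation test in \texttt{Sub Locations} never rejects the \emph{globally} correct placement $p^\star(n)$. This is where I must be precise about what \texttt{validate}$(x_i,R)$ checks: it tests the constraints of (\ref{eq:unit2}) only between $n$ and the currently realized set $R$, not the full graph. I would argue that a globally valid $p^\star$ restricted to $R\cup\{n\}$ automatically satisfies every constraint among those nodes — edge-length equalities for each $E_{(n,j)}\in E$ with $j\in R$, and the strict separation $g(p^\star(n),p^\star(j))>r^2$ for each $E_{(n,j)}\notin E$ with $j\in R$ — precisely because these are a \emph{subset} of the constraints that $p^\star$ satisfies globally. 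Thus no valid placement is ever trimmed prematurely, and the only placements removed are those that genuinely violate a constraint and hence cannot extend to any valid realization.

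Finally I would close the loop: applying the induction up to $l=N\!-\!M$ shows that for \emph{every} valid realization $p^\star$ there is a surviving leaf $T^{(N\!-\!M)}_k$ equal to $p^\star$; conversely, every surviving leaf satisfies all constraints of (\ref{eq:unit2}) by construction (it passed validation at each level) and is therefore a valid realization. The map between valid realizations and surviving leaves is thus onto, establishing that the algorithm produces all valid solutions, i.e., it is optimal. The one subtlety I would flag is that distinct leaves could in principle encode the same geometric realization only up to the allowed symmetry already excluded by the three non-collinear anchors; since the anchors pin down orientation, reflection, and translation, each valid realization maps to a unique surviving leaf, so no valid solution is lost and none is spuriously duplicated.
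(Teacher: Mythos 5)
Your proposal is correct and takes essentially the same approach as the paper: the paper's (much terser) proof argues exactly that the level-by-level expansion is exhaustive and that a subtree is trimmed only when the partial realization already violates a constraint of (\ref{eq:unit2}), so no valid solution can be excluded. Your version merely makes this rigorous — the induction tracking a fixed valid realization $p^\star$, the membership $p^\star(n)\in X$, and the key observation that the local \texttt{validate} test checks only a subset of the constraints $p^\star$ satisfies globally and hence never prunes the correct placement — supplying the details the paper asserts in a single line.
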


\begin{proof}
From the algorithm, when the tree expands from level $l$ to $l\!+\!1$, all the possible instances are included (exhaustively). A subtree is trimmed off prematurely if and only if the realized set at that level is already violating some of the geometric distance conditions. Hence, no single valid solution may be left out from this search algorithm.  
\end{proof}

After constructing a complete tree, we can instantly obtain the solution by looking at the instances at the leaf level. If the graph is uniquely realizable, then only one instance at the leaf level. 

\vspace{-0.5cm}
\begin{figure}[htp]
\centerline{
\includegraphics[width=0.5in]{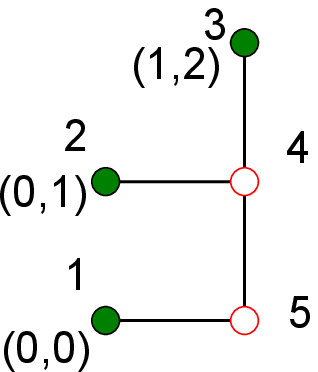}
}
\caption{An example of a connected network with 3 anchors (solid green) and 2 non-anchors(hollow red).}
\label{fig:ex1}
\end{figure}

\vspace{-0.5cm}
\subsection{Example}

 Consider a connected network in Fig. \ref{fig:ex1}. Node 1, 2 and 3 are anchors, while 4 and 5 are non-anchors. Suppose radius $r=1$ and nodes are aligned in integer coordinates. We start the initial realization instance with all the anchors and keep the unknown set aside, resulting in instance $T^0_0$ in Fig. \ref{fig:ex2}(A). Then at level 1, we pick a node, say, node 4, which is connected to the realized sub-tree. Note that node 4 is connected to node 2 and 3 and two possible locations are shown in Fig. \ref{fig:ex2}(B). The invalid instances are omitted to save space. At the next level, we pick node 5 and try to realize it in all the different locations possible. All, except for one, of the tentative realizations will be eliminated due to violations of some rule (distance rule, non-reachability rule, two nodes cannot collide).  The survival is the (only) valid realization. 

\subsection{Implementation, Memory and Complexity}

The above example illustrates how the tree is constructed, and how invalid sub-trees are eliminated as soon as possible. Nevertheless, implementing this algorithm for large networks (lots of non-anchors) can consume a huge amount of memory. To overcome this practicality, instead of width-first search, the {\it depth-first} search can be used along with the dynamic creation of the tree. The maximum space requirement is then limited to $O(N)$. 

The computation complexity is $O(|T|)$, where $|T|$ is the size of the resultant tree, which is directly related to the specific network topology and properties   (such as diameter, average node degree, and communication radius), as well as the order in which the non-anchor nodes are being selected to realize. 
At level $k$, suppose a non-anchor $j$, which connects to some realized node $i$  with distance $d_{(i,j)}$, is chosen to be realized. Then the spanning factor of the tree at this level is upper limited by $D(d_{(i,j)})$, where $D(d_{(i,j)})$ is the number of integer coordinates on the circle  of radius $d_{(i,j)}$. Note that many of these instances may be immediately eliminated after we evaluate all the other rules (formulated in (\ref{eq:unit2})). How much or what portion of these tentative realizations can be immediately discarded is different from case to case and very hard to quantify, but at least we should make $D(d_{(i,j)})$ as small as possible, especially for early levels of $k$. 

\begin{figure}[htp]
\centerline{
\includegraphics[width=0.9in]{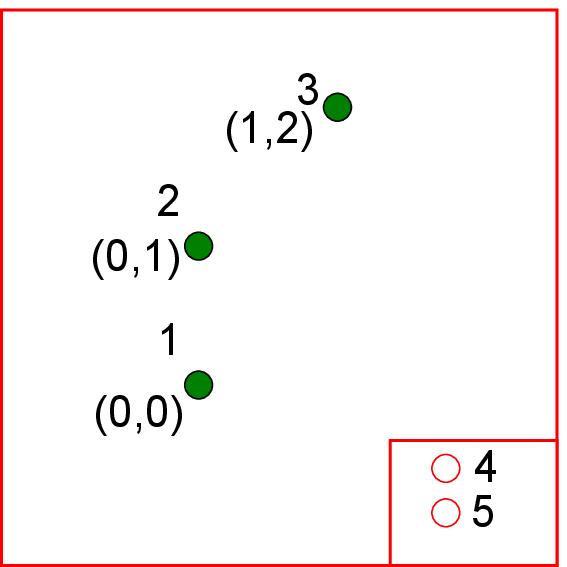} \ 
\includegraphics[width=1.8in]{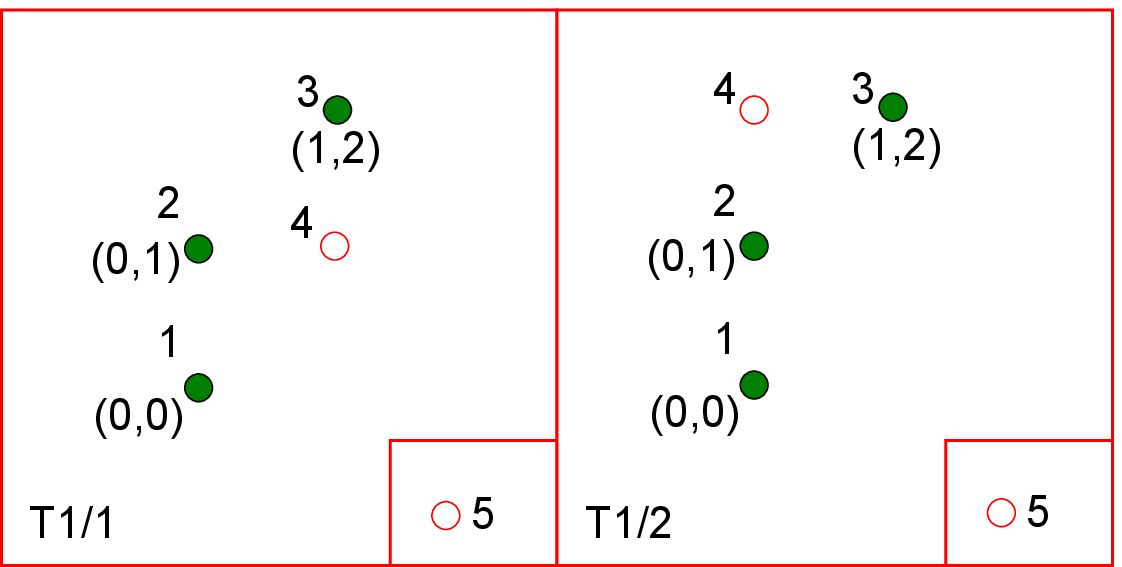}
}
\begin{center}{(A) Level 0. \ \ \ \ \ \ \ \ \ \ \ \ \ \ \ \ \ \ \ \ (B) Level 1. \ \ \ \  \ \ \ }\end{center}
\centerline{
\includegraphics[width=2.7in]{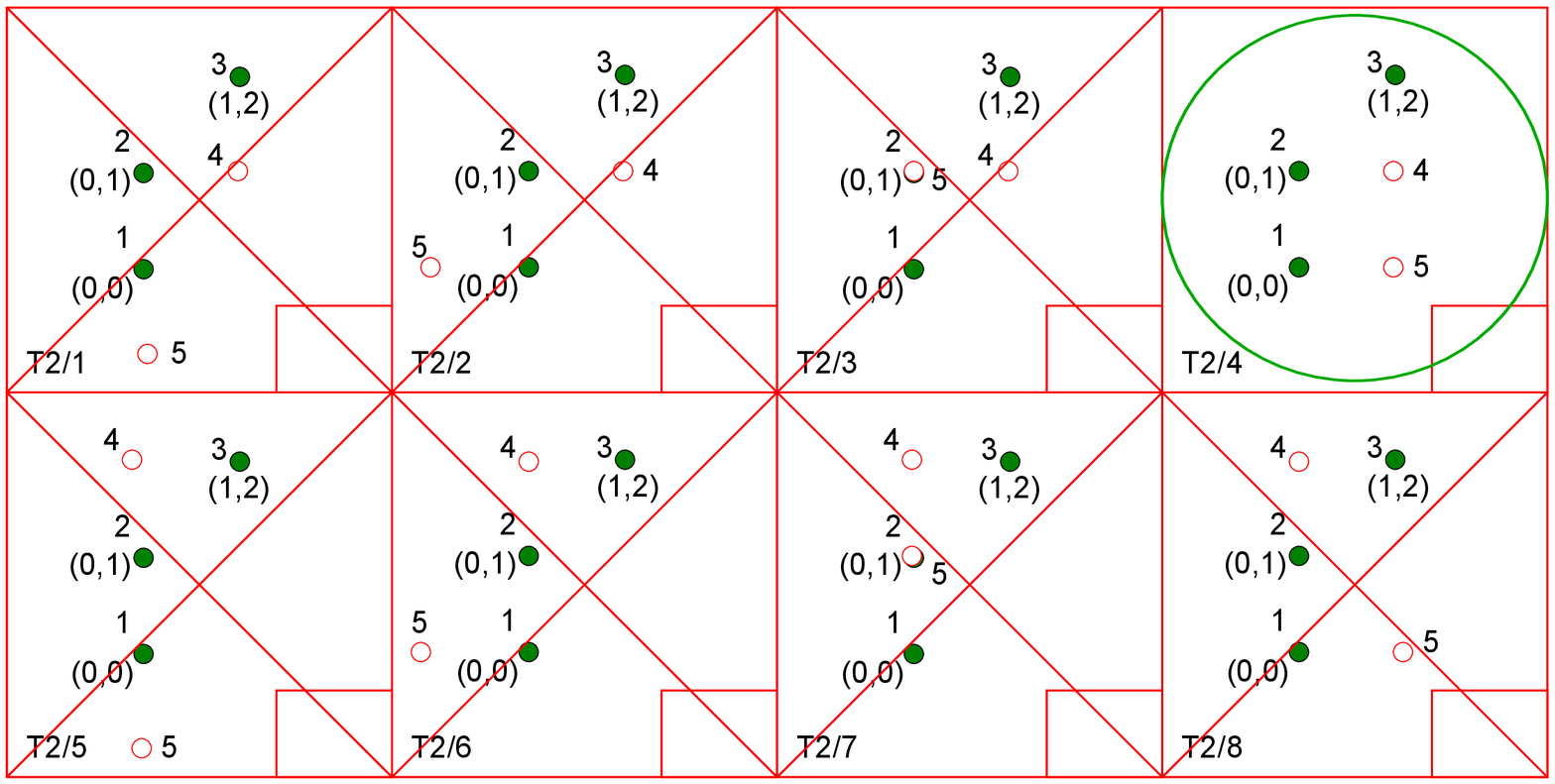}
}
 \begin{center}{(C) Level 2.}
\end{center}
\centerline{
\includegraphics[width=2.7in]{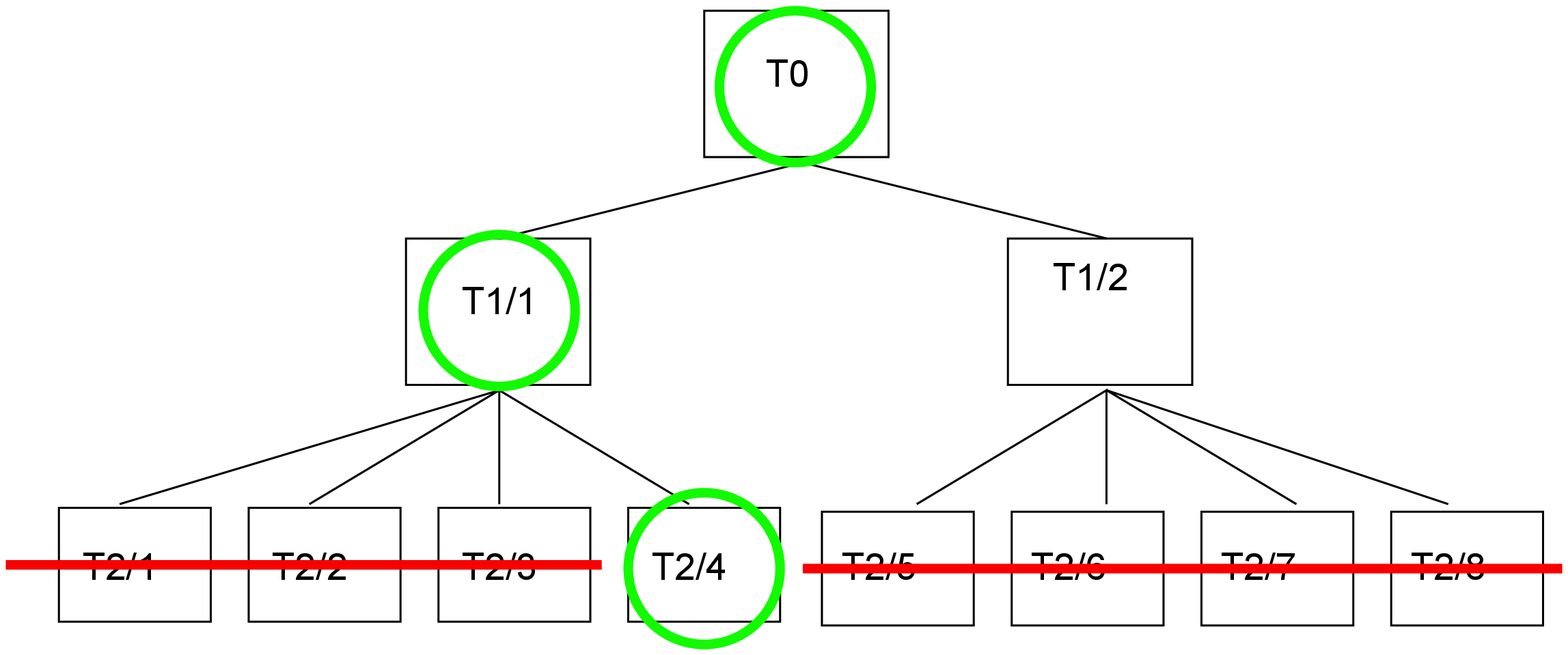}
}\begin{center}{(D) The overall tree $T$.}
\end{center}
\vspace{-.5cm}
\caption{Illustrating example.}
\label{fig:ex2}
\end{figure}
\vspace{-0.3cm}



\section{Simulation Results}
\label{sec:sim}

In our simulation setup,  a set of $N$ nodes are randomly placed in a square grid of $C\times C$. The connectivity is established by radius $r$. A small portion of  nodes ($M$) are randomly picked as anchors. 
Fig. \ref{fig:randnet} shows one example of a generated instance.  
For a given set of parameters $(C, r,N,M)$, we collect the average traversal times (traversal time is equivalent to total number of instances or total number of branches being visited), averaged over many random network realizations using the given parameters. 

In general, a small percentage of anchors makes network localization strenuous. This is particularly so in the conventional case formulated in (\ref{eq:unit1}). By actively exploiting ``useful implicit distance information,'' not only are complexity and  search time drastically reduced, but the grave impact caused by sparse anchors is also drastically relieved.  Fig. \ref{fig:sim1} compares the traversal time by using rules in (\ref{eq:unit1}) (dashed line) and rules in (\ref{eq:unit2}) (solid line). We also perform two different search strategies, one is a random picking of an unknown node when expanding and the other is picking an unknown node that has the most connections to the intermediate graph.  
 The experiments are performed on random graphs with $C\!=\!100$, $r\!=\!25$, $N\!=\!100$, and varying $M$. We see a large magnitude of reduction in traversal time due to (\ref{eq:unit2}), especially when only a very small percentage of anchors present. For example, with $1\%$ of anchors, the traversal time reduces from $252$ to a merely $2.57$ per non-anchor, a speed-up of some 100 times in random picking of a new node  and about 2 times using the most-connected strategy. 
 

\vspace{-0.3cm}
\begin{figure}[htp]
\centerline{
\includegraphics[width=1in]{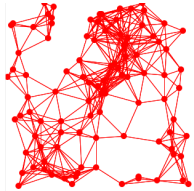}
}\vspace{-0.1cm}
\caption{A random network of 100 nodes in a $100\times 100$ square grid with 
radius $r\!=\!20$}
\label{fig:randnet}
\end{figure}
\vspace{-0.3cm}

In addition to the impact of the percentage of anchors, we also evaluate the impact of node density, or, equivalently, the communication radius $r$. 
Fig. \ref{fig:sim3} shows the case using rules in (\ref{eq:unit1}) and (\ref{eq:unit2}).
 Similar observations can be made. In the conventional case, a sparse network (small communication range and low node degrees) makes localization very difficult. Again, our new problem formulation and search strategy can significantly reduce this negative impact, more than 5 times reduction from 379 to 61 traversal time/node when the radius is 0.2, using the random algorithm while the most-connected algorithm is about 0.5 times better. However, when the network is sufficiently dense (e.g. $r=0.3$), we see that the additional information is no longer the advantage because the larger range implies that a node can reach the anchors easier within a few hops or even in a single hop.  In all cases, the results show that the unit disk graph model with priori information is useful when other information such as the connectivity and anchor nodes are limited.

\vspace{-0.5cm}
\section{Conclusion}
\label{sec:conclusion} 
Unit disk graphs are a popular wireless network model, in which the transmission range $r$ sets a clear cut on node connectivity (and hence distance information). Extracting useful features of this model, we have re-formulated the conventional network localization problem as one that actively exploits the ``no-edge'' information as well as the edge information. We show that the rigidity condition can then be safely relaxed, and still result in unique graph realization. 
We further developed a (depth-first) tree-search algorithm with active branch-trimming,  which runs efficiently in real world setting as the simulation results confirm. We conclude the paper by emphasizing that that information regarding two nodes not being connected can be as important as knowing that they are connected! This implicit information significantly reduces the average complexity.

\begin{figure}[htp]
\centerline{
\includegraphics[width=3.3in,height=2.3in]{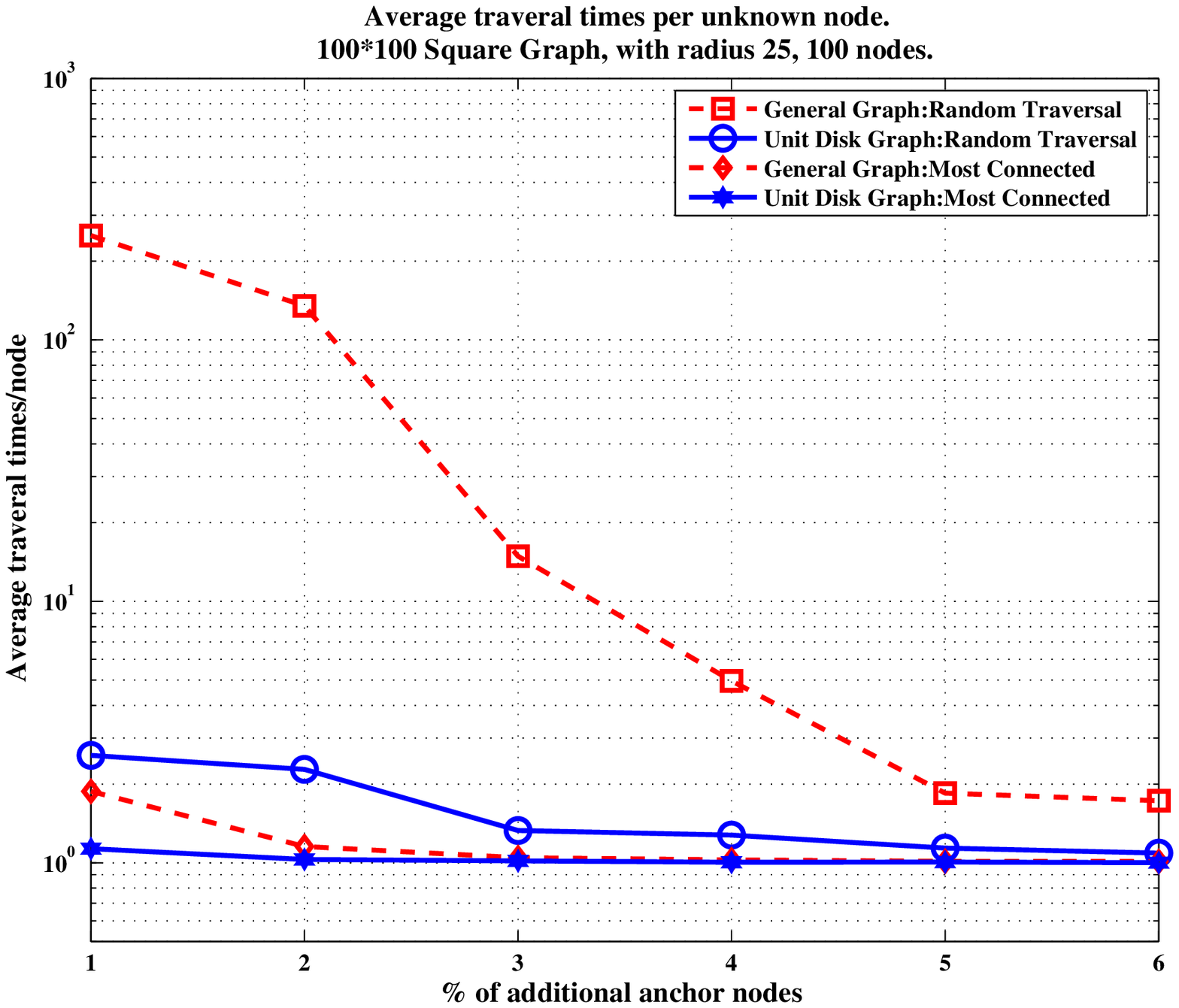}
}
\vspace{-0.5cm}
\caption{Comparison between using (solid line) and not using (dashed line) the no-edge information, for varying percentage of anchors.}
\label{fig:sim1}
\vspace{0.2cm}

\centerline{
\includegraphics[width=3.4in,height=2.3in]{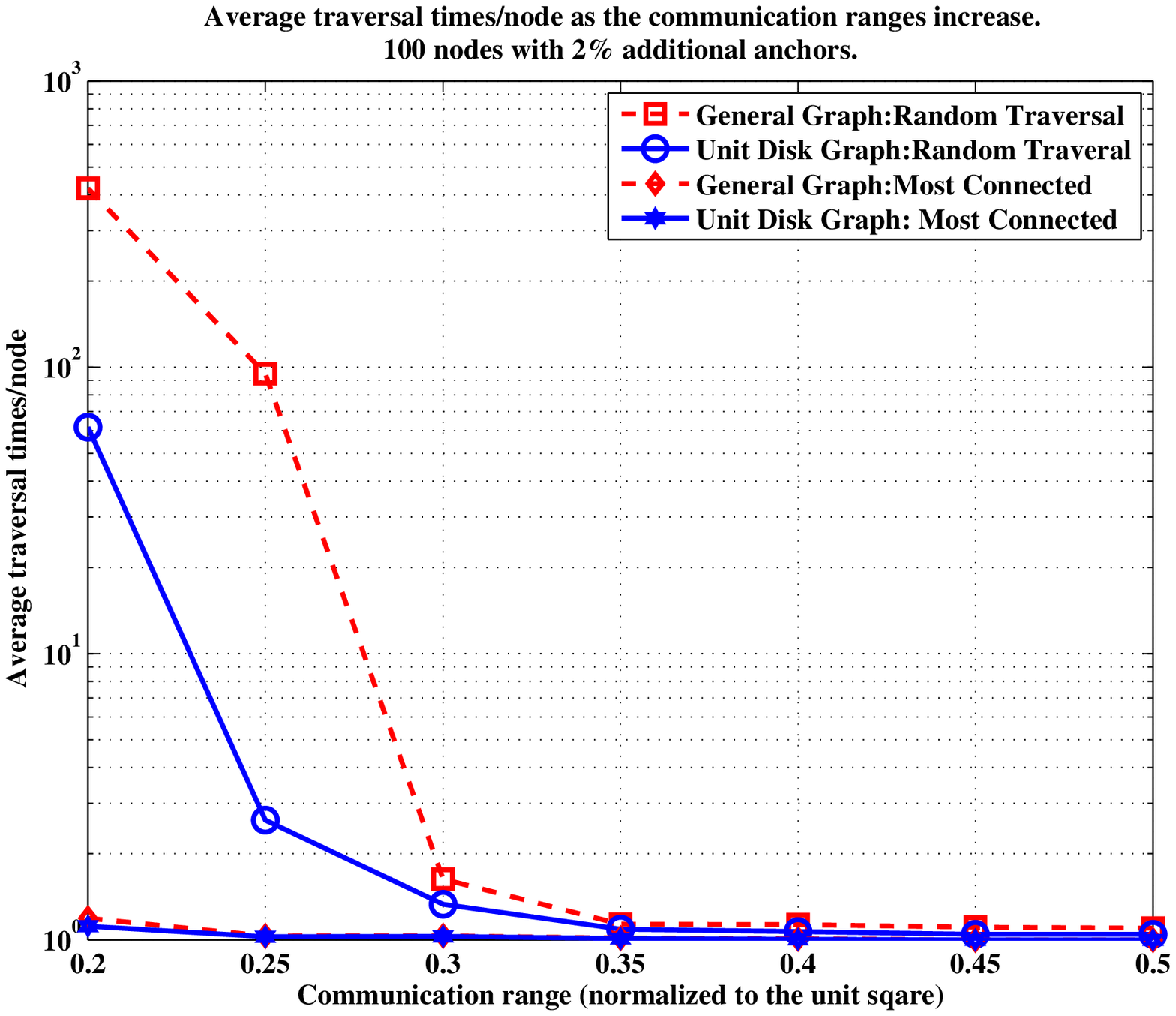} 
}\vspace{-0.5cm}
\caption{Average traversal time as the communication radius $r$ (i.e. density) increases. }
\label{fig:sim3}
\end{figure}
\vspace{-0.6cm}

\end{document}